\title{State Complexity Approximation}
\author{\makebox[0pt][c]{Yuan Gao}\hspace{4cm}\makebox[0pt][c]{Sheng Yu}
\institute{Department of Computer Science --
University of Western Ontario\\
London N6A 5B7 -- Ontario -- Canada}
\email{\makebox[0pt][c]{ygao72@csd.uwo.ca}\hspace{4cm}\makebox[0pt][c]{syu@csd.uwo.ca}}
}
\begin{document}

\maketitle

\begin{abstract}
In this paper, we introduce the new concept of state complexity
approximation, which is a further development of state complexity
estimation. We show that this new concept is useful in both of the
following two cases: the exact state complexities are not known
and the state complexities have been obtained but are in
incomprehensible form. 
\end{abstract}

\newtheorem{apptheo}{T\/heorem}[section]
\newtheorem{conjecture}{Conjecture}[section]
\newtheorem{applem}{Lemma}[section]
\newcommand{\propersubset}{\subset}
\newtheorem{open}{Open problem}[section]

\section{Introduction}

The state complexity of combined operations has been studied in,
e.\,g., \cite{com1,com2,EGLY}. It has been shown that the state
complexity of combined operations is at least as important and
practical as the state complexity of individual operations. There
is only a limited number of individual operations on regular
languages. However, the number of combined operations on regular
languages is unlimited and each of them is not simply a
mathematical composition of the state complexities of their
component individual operations. It appears that the exact state
complexity of each combined operation has to be studied
specifically.

There are at least the following two problems concerning the state
complexities for combined operations. First, the state
complexities of many combined operations are extremely difficult
to compute. Second, a large proportion of results that have been
obtained are pretty complex and impossible to comprehend. For
example, the state complexity of the catenation for four regular
languages accepted by $m, n, p, q$ states, respectively, is
$$9(2m-1)2^{n+p+q-5}-3(m-1)2^{p+q-2}-(2m-1)2^{n+q-2}+(m-1)2^q+(2m-1)2^{n-2}.$$

It is clear that close estimations of state complexities are good
enough in many automata applications. In \cite{kai,EGLY},
estimations of state complexity of combined operations have been
proposed and studied. In this paper, we go further in the
direction of the study in \cite{kai,EGLY} and introduce the
concept of state complexity approximation. Briefly speaking, an
approximation of a state complexity is an estimate of the state
complexity with a ratio bound clearly defined. The ratio bound
gives a precise measurement on the quality of the estimate.

The idea of state complexity approximation is from the notion of
approximation algorithms which was formalized in early 1970's by
David S. Johnson et al. \cite{approx1,approx2,approx3}. Many
polynomial-time approximation algorithms have been designed for a
quite large number of NP-complete problems, which include the
well-known travelling-salesman problem, the set-covering problem,
and the subset-sum problem. Obtaining an optimal solution for an
NP-complete problem is considered intractable. Near optimal
solutions are often good enough in practice. Assuming that the
problem is a maximization or a minimization problem, an
approximation algorithm is said to have a ratio bound of $\rho(n)$
if for any input of size $n$, the cost~$C$ of the solution
produced by the algorithm is within a factor of $\rho(n)$ of the
cost~$C^*$ of an optimal solution \cite{algorithm}:
$$\max\left(\frac{C}{C^*}, \frac{C^*}{C}\right) \leq \rho(n).$$

The concept of state complexity approximation is in many ways
similar to that of approximation algorithms. A state complexity
approximation is close to the exact state complexity and normally
not equal to it. The ratio bound shows the error range of the
approximation. In addition to the property of having a small ratio
bound in general, we also consider that a state complexity
approximation should be in a simple and intuitive form.

In spite of the similarities, there are fundamental differences
between a state complexity approximation and an approximation
algorithm. The efforts in the area of approximation algorithms are
in finding polynomial algorithms for NP-complete problems such
that the results of the algorithms approximate the optimal
results. In comparison, the efforts in the state complexity
approximation are in searching directly for the estimations of
state complexities such that they satisfy certain ratio bounds.
The aim of designing an approximation algorithm is to transform an
intractable problem into one that is easier to compute and the
result is acceptable although not optimal. In comparison, a state
complexity approximation result may have two different effects:
(1) it gives a reasonable estimation of certain state complexity,
with some bound, the exact value of which is difficult or
impossible to compute; or (2) it gives a simpler and more
comprehensible formula that approximates a known state complexity.

In the next section, we give some basic definitions and notation
including the formal definition of state complexity approximation.
In Section 3, we show the state complexity approximation results
on four basic combined operations: the star of union, the star of
intersection, the star of catenation, and the star of reversal. In
Section 4, we show that state complexity approximation results can
be easily obtained for some operations the exact state
complexities of which may be very difficult to obtain. In Section
5, we show that certain state complexity can be very complex in
formulation. A state complexity approximation is clearly more
intuitive and comprehensible than the exact state complexity. In
Section~6, we conclude the paper.

\section{Preliminaries}

A deterministic finite automaton (DFA) is denoted by a 5-tuple $A
\!=\! (Q, \Sigma, \delta, s, F),$ where $Q$ is the finite and nonempty
set of states, $\Sigma$ is the finite and nonempty set of input
symbols, $\delta: Q\times\Sigma \rightarrow Q$ is the state
transition function, $s\in Q$ is the initial state, and
$F\subseteq Q$ is the set of final states. A DFA is said to be
complete if $\delta$ is a total function. Complete DFAs are the
basic model for considering state complexity. Without specific
mentioning, all DFAs are assumed to be complete in this paper.

A nondeterministic finite automaton (NFA) is also denoted by a
5-tuple \hbox{$M = (Q, \Sigma, \delta, s, F)$}, where $Q$, $\Sigma$, $s$,
and $F$ are defined the same way as in a DFA and $\delta:
Q\times\Sigma\rightarrow 2^Q$ maps a pair consisting of a state
and an input symbol into a set of states rather than, more
restrictively, a single state. An NFA may have multiple initial
states, in which case an NFA is denoted $(Q, \Sigma, \delta, S,
F)$ where~$S$ is the set of initial states.

The reader may refer to \cite{Hopcroft,Salomaa,Yu} for a rather
complete background knowledge in automata theory.

State complexity (\cite{Yu}) is a descriptional complexity measure
for regular languages based on the deterministic finite automaton
model. So, by state complexity we mean the deterministic state
complexity.

The {\em state complexity} of a regular language $L$, denoted
$sc(L)$, is the number of states in the minimal complete DFA
accepting $L$. When we speak about the state complexity of a
(combined) operation on regular languages, we mean the worst case
state complexity of the languages resulting from the operation as
a function of the state complexity of the regular operand
languages.
%Add
So, without specific mentioning, by state complexity we mean the
worst-case state complexity in the following.
%end addition

If the above definition is based on minimal NFA rather than
minimal complete DFA, we have the nondeterministic state
complexity, which has been studied in~\cite{HK,markus2}.

Let $\xi$ be a combined operation on $k$ regular languages. Assume
that the state complexity of $\xi$ is $\theta$. We say that
$\alpha$ is a state complexity approximation of the operation
$\xi$ with the ratio bound $\rho$ if, for any large enough
positive integers $n_1, \ldots, n_k$, which are the numbers of
states of the DFAs that accept the argument languages of the
operation, respectively, $$\max\left(\frac{\alpha(n_1, \ldots,
n_k)}{\theta(n_1, \ldots, n_k)}, \frac{\theta(n_1, \ldots,
n_k)}{\alpha(n_1, \ldots, n_k)}\right) \leq \rho(n_1, \ldots,
n_k).$$ Note that in many cases, $\rho$ is a constant. Since state
complexity is a worst-case complexity, an approximation that is
not smaller than the actual state complexity is preferred, which
is the case for every approximation result in this paper.

\section{Some basic results on state complexity approximation}

In \cite{kai}, an estimation method through nondeterministic state
complexities was introduced for the (deterministic) state
complexities of certain types of combined operations. The method
is described in the following.

Assume we are considering the combination of a language operation
$g_1$ with~$k$ arguments together with operations  $g_2^i$, $i =
1, \ldots, k$. The {\em nondeterministic estimation upper bound,}
or {\em NEU-bound\/} for the deterministic state complexity of the
combined operation $g_1(g_2^1, \ldots, g_2^k)$ is calculated as
follows:

\begin{itemize}
\item[{\rm (i)}]
Let the arguments of the operation $g^i_2$ be DFAs $A^i_j$ with
$m^i_j$ states,  \hbox{$i = 1, \ldots, k$}, $j = 1, \ldots, r_i$, $r_i
\geq 1$.
\item[{\rm (ii)}] The nondeterministic
state complexity of the combined operation is at most the
composition of the individual state complexities, and hence the
language
$$
g_1( g_2^1(L(A^1_1), \ldots, L(A^1_{r_1})), \ldots,
g_2^k(L(A^k_1), \ldots, L(A^k_{r_k})))
$$
has an NFA with at most
$$
{\rm nsc}(g_1)( {\rm nsc}(g^1_2)(m^1_1, \ldots, m^1_{r_1}),
\ldots, {\rm nsc}(g^k_2)(m^k_1, \ldots, m^k_{r_k}))
$$
states, where {\rm nsc}$(g)$ is the nondeterministic state
complexity (as a function) of the language operation $g$.
\item[{\rm (iii)}] Consequently,
the deterministic state complexity of the combined operation
$g_1(g_2^1, \ldots, g_2^k)$ is upper bounded by
\begin{equation}
\label{NEU} 2^{ {\rm nsc}(g_1)( {\rm nsc}(g^1_2)(m^1_1, \ldots,
m^1_{r_1}), \ldots, {\rm nsc}(g^k_2)(m^k_1, \ldots, m^k_{r_k})) }
\end{equation}
\end{itemize}

The nondeterministic state complexity of the basic individual
operations on regular languages has been investigated in
\cite{HK,markus2,ellul}.

In the following we show that this estimation method can produce
nice approximation results for the state complexities of certain
combined operations. The table below shows the actual state
complexities and their corresponding NEU-bounds of the four
combined operations \cite{kai}: (1) star of union, (2) star of
intersection, (3) star of catenation, and (4) star of reversal.

{\small\renewcommand{\arraystretch}{1.2}
\begin{center}
\begin{tabular}{|c|c|c|c|}
\hline
Operations & State Complexity & NEU-bound \\
\hline $(L(A)\cup L(B))^*$ & $2^{m+n-1}-2^{m-1}-2^{n-1}+1$
& $2^{m+n+2}$\\
\hline $(L(A)\cap L(B))^*$ & $ 3/4\; 2^{mn}$
& $2^{mn+1}$ \\
\hline $(L(A)L(B))^*$ & $2^{m+n-1}+2^{m+n-4}-2^{m-1}-2^{n-1}+m+1$
&
$2^{m+n+1}$ \\
\hline
$(L(B)^R)^*$ & $2^n$ & $2^{n+2}$\\
\hline
\end{tabular}
\end{center}
}

The next table shows clearly that each NEU-bound in the previous
table gives a very good approximation to its corresponding state
complexity.

{\small\renewcommand{\arraystretch}{1.2}
\begin{center}
\begin{tabular}{|c|c|}
\hline
Operations & Ratio bounds of the approximation \\
\hline
$(L(A)\cup L(B))^*$ & $\approx 8$ \\
\hline
$(L(A)\cap L(B))^*$ & $8/3$ \\
\hline
$(L(A)L(B))^*$ & $4$ \\
\hline
$(L(B)^R)^*$ & $4$ \\
\hline
\end{tabular}
\end{center}
}

In the above cases, although the exact state complexities have
been obtained, the approximation results with small ratio bounds
are good enough for practical purposes, and they clearly have the
advantage of being more intuitive and simpler in formulation.

\section{Approximation without knowing actual state complexity}

In this section, we consider two combined operations: (1) star of
left quotient and (2) left quotient of star. For each of the
combined operations, we do not have the exact state complexity;
however, an approximation with a good ratio bound is obtained.

Let $R$ and $L$ be two languages over the alphabet $\Sigma$. Then
the left quotient of~$R$ by $L$, denoted $L\backslash R$, is the
language
$$\{y\mid xy\in R \mbox{ and } x\in L\}.$$

In the following, we assume that all languages are over an
alphabet of at least two letters.

\subsection{The state complexity approximation of star of left quotient}

%% The state complexity of left quotient of an $n$-state regular
%% language by an arbitrary language is $2^n-1$ and the state
%% complexity of star of an $n$-state regular language is
%% $2^{n-1}+2^{n-2}$(\cite{YZS}). A direct composition of the two
%% functions would be doubly exponential $2^{2^n-2}+2^{2^n-3}$. It is
%% obviously an upper bound of the state complexity of the combined
%% operation star of left quotient, however, it cannot be reached.
%% Below we compute an approximation of the state complexity of this
%% combined operation which is more precise.

\begin{theorem}
\label{Theorem1}

Let $R$ be a language accepted by an $n$-state DFA $M$, $n>0$, and
$L$ be an arbitrary language. Then there exists a DFA of at most
$2^n$ states that accepts $(L\backslash R)^*$.
\end{theorem}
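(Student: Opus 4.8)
The plan is to realize $L\backslash R$ by a nondeterministic automaton that reuses the states of $M$, and then to apply the standard star construction followed by the subset construction, using the completeness of $M$ to rule out the empty subset and so keep the state count at $2^n$.

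First I would fix $M=(Q,\Sigma,\delta,s,F)$ with $|Q|=n$ and set $P=\{\delta(s,x)\mid x\in L\}\subseteq Q$. By the definition of left quotient, a word $y$ lies in $L\backslash R$ exactly when $xy\in R$ for some $x\in L$, i.e.\ when $\delta(\delta(s,x),y)\in F$ for some $x\in L$; equivalently, when $\{\delta(q,y)\mid q\in P\}\cap F\neq\emptyset$. Hence $L\backslash R=L(N)$ for the automaton $N=(Q,\Sigma,\delta,P,F)$, which has the same $n$ states and the same total, deterministic transition function $\delta$ as $M$, but takes $P$ as its set of initial states. The arbitrariness of $L$ is harmless here: it only affects \emph{which} subset $P$ is, not the size of $N$.

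Next I would build an automaton $N'$ for $(L\backslash R)^*=L(N)^*$ by the usual star construction: adjoin a fresh state $s_0$, declare it both initial and final, add $\varepsilon$-transitions from $s_0$ to every state of $P$ and from every state of $F$ back to every state of $P$, and otherwise keep $\delta$ and $F$. Determinizing $N'$ by the subset construction, the initial state is the $\varepsilon$-closure $\{s_0\}\cup P$, and since $s_0$ has no incoming transition and never appears in the $\varepsilon$-closure of any set of $Q$-states, every other reachable state is a subset of $Q$. The decisive point is then a counting one: because $\delta$ is total, reading a letter from any nonempty subset of $Q$ yields a nonempty subset, so (assuming $P\neq\emptyset$) the empty set is never reached. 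Thus the reachable states are the single initial state together with at most $2^n-1$ nonempty subsets of $Q$, giving at most $2^n$ states in all.

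Finally I would dispose of the degenerate case $P=\emptyset$, i.e.\ $L\backslash R=\emptyset$, where $(L\backslash R)^*=\{\varepsilon\}$ is accepted by a two-state DFA, so the bound $2^n$ holds for every $n\geq 1$. The step I expect to be the crux is the counting in the determinization: one must verify that the extra initial state genuinely cannot coincide with any subset of $Q$, and that completeness of $M$ is exactly what forbids the empty subset—for a general NFA the empty set would be reachable and the bound would degrade beyond $2^n$. The remainder is the routine check that $N'$ indeed accepts $(L\backslash R)^*$.
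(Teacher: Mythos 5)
Your proposal is correct and follows essentially the same route as the paper: replace the initial state by the set of states reachable under words of $L$, add $\varepsilon$-transitions from final states back to that set, determinize while observing that completeness keeps the empty subset unreachable (at most $2^n-1$ subsets of $Q$), and spend one extra state on $\varepsilon$. The only cosmetic difference is that you perform the star construction before determinizing whereas the paper determinizes the automaton for $(L\backslash R)^+$ first and then adjoins the new initial--final state; your explicit treatment of the degenerate case $P=\emptyset$ is a small bonus the paper omits.
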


\begin{proof} Let $M=(Q,\Sigma , \delta , s, F)$ be a complete DFA of $n$
states and $R=L(M)$. For each $q\in Q$, denote by $L(M_q)$ the set
$\{ w\in \Sigma ^*|\delta (s,w)=q \}$. We construct an NFA $M'$
with multiple initial states to accept $(L\backslash R)^+$ as
follows. $M'$ is the same as $M$ except that the initial state $s$
of $M$ is replaced by the set of initial states $S=\{ q|L(M_q)\cap
L \neq \emptyset \}$ and $\varepsilon$-transitions are added from
each final state to the states in $S$. By using subset
construction, we can construct a DFA $A'$ of no more than $2^n-1$
states that is equivalent to $M'$. Note that $\emptyset$ is not a
state of $A'$. From the DFA $A'$, we construct a new DFA $A$ by
just adding a new initial state that is also a final state and the
transitions from this new state that are the same as the
transitions from the original initial state of $A'$. It is easy to
see that $L(A)=(L\backslash R)^*$ and $A$ has $2^n$ states.
%% So, star of left quotient will not produce a DFA
%% of more than $2^n$ states. We can consider $2^n$ an approximation
%% of the state complexity of star of left quotient.
\end{proof}

This result gives an upper bound for the state complexity of the
combined operation: star of left quotient. %It means that for any
%$n$-state DFA language $R$, $n>0$, and an arbitrary language $L$,
%the state complexity of the star of the left quotient of $R$ by
%$L$ is no more than $2^n$.

\begin{theorem}
\label{Theorem2} For any integer $n\geq 2$, there exist a DFA $M$
of $n$ states and a language $L$ such that any DFA accepting
$(L\backslash L(M))^*$ needs at least $2^{n-1}+2^{n-2}$ states.
\end{theorem}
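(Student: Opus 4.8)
The plan is to reduce this lower bound to the classical tight lower bound for the ordinary star operation, by choosing the simplest possible left operand. Specifically, I would take $L=\{\varepsilon\}$. Then for any language $R$ we have $L\backslash R=\{y\mid \varepsilon y\in R\}=\{y\mid y\in R\}=R$, so that $(L\backslash R)^*=R^*$. Hence it suffices to exhibit, for each $n\geq 2$, an $n$-state DFA $M$ with $R=L(M)$ whose star $R^*$ requires at least $2^{n-1}+2^{n-2}$ states. Note that $2^{n-1}+2^{n-2}=\frac{3}{4}\,2^n$, so together with Theorem~\ref{Theorem1} this would pin the state complexity of the star of left quotient to within a factor of $4/3$.

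For the witness I would use the standard star witness over the alphabet $\{a,b\}$: let $M=(\{0,1,\ldots,n-1\},\{a,b\},\delta,0,\{n-1\})$, where $a$ acts as the cyclic permutation $\delta(i,a)=i+1 \bmod n$ and $b$ acts essentially as the identity except at one carefully chosen state, so that the back-transition introduced by the star operation cannot be short-circuited. Following the construction used in the proof of Theorem~\ref{Theorem1}, I would build an NFA for $R^+$ (equivalently $R^*$) by keeping $M$ and adding $\varepsilon$-transitions from the final state $n-1$ back to the start state $0$, and then determinize by the subset construction.

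The core of the argument is counting and distinguishing the reachable subsets. The crucial structural fact is that, because of the added back-transition, every reachable subset $P\subseteq\{0,\ldots,n-1\}$ that meets the final set must also contain the start state; that is, $n-1\in P$ forces $0\in P$. Counting accordingly, the admissible subsets avoiding $n-1$ number $2^{n-1}$, and those containing both $n-1$ and $0$ number $2^{n-2}$, giving exactly $2^{n-1}+2^{n-2}$ candidate states. I would then show that each such subset is reachable from the initial state by an explicit word (using the $a$-cycle to rotate elements into position and $b$ to selectively toggle membership) and that any two distinct admissible subsets are Myhill--Nerode inequivalent, by exhibiting for each pair a suffix accepted from one but not the other.

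The step I expect to be the main obstacle is this reachability-plus-distinguishability bookkeeping rather than any conceptual difficulty. The delicate point is to fix the $b$-transitions of the witness so that all $2^{n-1}+2^{n-2}$ structurally admissible subsets are genuinely reachable, while simultaneously guaranteeing pairwise distinguishability; the choice of $L=\{\varepsilon\}$ removes any complication coming from the quotient itself, isolating the problem as precisely the well-understood tightness analysis for the star operation.
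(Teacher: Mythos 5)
Your proposal is correct and follows essentially the same route as the paper: set $L=\{\varepsilon\}$ so that $(L\backslash L(M))^*=L(M)^*$, then invoke the standard $n$-state witness for the star operation whose star requires $2^{n-1}+2^{n-2}$ states. The only difference is that the paper simply cites the known tightness result for star (from Yu--Zhuang--Salomaa) rather than re-deriving the reachability and distinguishability analysis you sketch, and it handles $n=2$ by a separate small example.
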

\begin{proof} For $n=2$, it is clear that $R=\{ w\in
\{a,b\}^*|\#_a(w)\mbox{ is odd}\}$ is accepted by a two-state DFA,
and $$(\{ \varepsilon\}\backslash R)^*=R^*=\{ \varepsilon\}\cup \{
w\in \{a,b\}^*|\#_a(w)\geq 1\}$$ cannot be accepted by a DFA with
less than three states.

For $n>2$, let $M=(Q,\{ a, b\} , \delta , 0, \{n-1 \})$ where $Q =
\{0, 1, \ldots , n-1\}$, $\delta(i, a) = i+1$  mod $n$, $i=0, 1,
\ldots , n-1$, $\delta(0, a) = 0$ and $\delta(j, b) =j+1$ mod $n$,
$j=1, \ldots , n-1.$

%For $n>2$, let $M=(Q,\{ a, b\} , \delta , 0, \{n-1 \})$ where $Q
%= \{0, 1, \ldots , n-1\}$ and
%\begin{eqnarray*}
%\delta(i, a) & = & i+1  \ \ {\rm{ mod }} \ \  n, \ \ i=0, 1, \ldots , n-1,\\
%\delta(0, a) & = & 0,\\
%\delta(j, b) & = & j+1  \ \ {\rm{ mod }} \ \  n, \ \ j=1, \ldots ,
%n-1.
%\end{eqnarray*}
%The transition function of $M$ is shown in Figure~\ref{Star}.
%\begin{figure}[ht]
%  \centering
%  \includegraphics[scale=0.75]{Star_1.eps}
%  \caption{The transition diagram of DFA $M$}
%\label{Star}
%\end{figure}
It has been proved in~\cite{YZS} that the minimal DFA accepting
$L(M)^*$ has \hbox{$2^{n-1}+2^{n-2}$} states. Let $L=\{ \varepsilon \}$.
Then $(L\backslash L(M))^*=L(M)^*$. So, any DFA accepting
$(L\backslash L(M))^*$ needs at least $2^{n-1}+2^{n-2}$ states.
\end{proof}

This result gives a lower bound for the state complexity of star
of left quotient. Clearly, the lower bound does not coincide with
the upper bound. We still do not know the exact state complexity
for this combined operation, yet, which could be difficult to
obtain. However, we can easily obtain a good state complexity
approximation for the operation. Let $2^n$ the approximation. Then
the ratio bound would be
$$\frac{2^{n}}{2^{n-1}+2^{n-2}}= \frac{4}{3}\ \ .$$

\subsection{The state complexity approximation of left quotient of star}

%% A direct composition of the state complexity of star and that of
%% left quotient is $2^{2^{n-1}+2^(n-2)}-1$. This doubly exponential
%% upper bound of the state complexity of the combined operation left
%% quotient of star is also too high to be reached. In the following
%% we show an approximation that is more close to the exact state
%% complexity of this combined operation.

Here we consider the combined operation: left quotient of star.

\begin{theorem}
\label{Theorem5} Let $R$ be a language accepted by an $n$-state
DFA $M$ and $L$ an arbitrary language. Then there exists a DFA of
at most $2^{n+1}-1$ states that accepts $L\backslash R^*$.
\end{theorem}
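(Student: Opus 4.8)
The plan is to mirror the proof of Theorem~\ref{Theorem1}: first realise $R^*$ by a small NFA, then implement the left quotient by relocating the initial states, and finally determinise by the subset construction while showing that the empty subset can never be reached.

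First I would construct an $\varepsilon$-NFA $N$ with $n+1$ states accepting $R^*$. Starting from $M=(Q,\Sigma,\delta,s,F)$, I add a single new state $p_0$, declare it both initial and final (so that $\varepsilon\in R^*$ is accepted), and add $\varepsilon$-transitions $p_0\ra s$ and $f\ra s$ for every $f\in F$; all remaining transitions are those of $M$. A routine check on accepting runs shows that $N$ accepts exactly $R^*$, and $N$ has $|Q|+1=n+1$ states.

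Next I would handle the left quotient. Writing $\hat\delta$ for the transition function of $N$ extended to words (taking $\varepsilon$-closures), observe that $y\in L\backslash R^*$ iff $xy\in R^*$ for some $x\in L$, i.e.\ iff $N$ has an accepting run on $xy$ for some $x\in L$. Such a run, after reading $x$, sits in some state of $\hat\delta(p_0,x)$ and then accepts $y$ from there. Hence $L\backslash R^*$ is accepted by the NFA $N_S$ obtained from $N$ by replacing $p_0$ with the initial-state set $S=\bigcup_{x\in L}\hat\delta(p_0,x)$, keeping the same transitions and final states. The crucial point, and the only place where the arbitrariness of $L$ enters, is that $S$ is a well-defined subset of the finite set $Q\cup\{p_0\}$: membership of a prefix $x$ in $L$ affects $N_S$ solely through the state it reaches, so $L\backslash R^*$ is regular even when $L$ is not.

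Finally I would apply the subset construction to $N_S$, yielding a DFA $A$ whose states are the reachable subsets of $Q\cup\{p_0\}$. A priori this gives at most $2^{n+1}$ states; to obtain the claimed bound $2^{n+1}-1$ I would argue that $\emptyset$ is never reachable. Since $M$ is complete, $\delta(q,a)$ is defined for every $q\in Q$ and $a\in\Sigma$, so any subset containing a state of $Q$ maps, on each input symbol, to a subset that again contains a state of $Q$; moreover every nonempty $\varepsilon$-closed subset contains a state of $Q$, because the $\varepsilon$-closure of $p_0$ and of each final state contains $s$. Thus, if the start subset $S$ is nonempty, every reachable subset is nonempty, leaving at most $2^{n+1}-1$ of them; and if $S=\emptyset$ then $L\backslash R^*=\emptyset$ is accepted by a one-state DFA, which also meets the bound. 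I expect the only delicate point to be this completeness-driven unreachability of $\emptyset$, since it is exactly what saves the single state over the trivial $2^{n+1}$ count.
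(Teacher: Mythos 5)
Your proof is correct and follows essentially the same route as the paper's: build an $(n+1)$-state $\varepsilon$-NFA for $R^*$, replace its initial state by the set of states reachable under prefixes in $L$, and determinise while excluding the empty subset. The only difference is cosmetic (your $\varepsilon$-transitions go $f\ra s$ rather than $f\ra s'$), and you in fact supply the completeness argument for why $\emptyset$ is unreachable, which the paper merely asserts.
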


\begin{proof} Let $M\!\!=\!\!(Q,\Sigma , \delta , s, F)$ be a complete DFA of $n$
states and $R\!\!=\!\!L(M)$. Then we can easily construct an
$(n+1)$-state NFA \hbox{$M'=(Q\cup \{s'\} ,\Sigma, \delta', s',
F\cup\{s'\})$} such that $L(M')=R^*$ by adding a new initial state
$s'$ and transitions \hbox{$\delta'(s',\varepsilon)=s$} and
$\delta'(f,\varepsilon)=s'$ for each final state $f\in F$. For
each $q\in Q\cup \{s' \}$, we denote by $L(M_q)$ the set $\{ w\in
\Sigma ^*|q \in \delta'(s',w) \}$. We construct an NFA $N$ with
multiple initial states to accept $L\backslash L(M')=L\backslash
R^*$ as follows. $N$ is the same as $M'$ except that the initial
state $s'$ of $M'$ is replaced by the set of initial states 
$S=\{q \mid L(M_p)\cap L \neq \emptyset \}$. By using subset construction,
we can verify that there exists a DFA $A$ of no more than
$2^{n+1}-1$ states that is equivalent to $N$. Note that $\emptyset
$ is not a state of $A$. It is easy to see that
$$L(A)=L(N)=L\backslash L(M')=L\backslash R^*.$$ So, $2^{n+1}-1$
is an upper bound of the state complexity of left quotient of
star.\end{proof}

\begin{theorem}
\label{Theorem6} For any integer $n\geq 2$, there exist a DFA $M$
of $n$ states and a language $L$ such that any DFA accepting
$L\backslash L(M)^*$ needs at least $2^{n-1}+2^{n-2}$ states.
\end{theorem}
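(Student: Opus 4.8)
The plan is to exhibit a single witness family that forces the required blow-up, and the natural candidate is the same automaton $M$ used in Theorem~\ref{Theorem2}, taken together with $L=\{\varepsilon\}$. With this choice the combined operation collapses to a familiar one: since $\varepsilon\backslash R^*=R^*$, we have $L\backslash L(M)^*=L(M)^*$, and the result of Theorem~\ref{Theorem2}'s appeal to~\cite{YZS} gives exactly that the minimal DFA for $L(M)^*$ has $2^{n-1}+2^{n-2}$ states. First I would verify the $n=2$ base case directly, just as in Theorem~\ref{Theorem2}, by taking $R$ to be the language of words over $\{a,b\}$ with an odd number of $a$'s and checking that $(\{\varepsilon\}\backslash R)^*=R^*$ needs at least three states.

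For the general case $n>2$ I would reuse verbatim the DFA $M=(Q,\{a,b\},\delta,0,\{n-1\})$ defined in the proof of Theorem~\ref{Theorem2}, with $Q=\{0,1,\dots,n-1\}$ and the transition structure specified there. The single substantive step is the identity $L\backslash L(M)^*=L(M)^*$ when $L=\{\varepsilon\}$: by definition $L\backslash R'=\{y\mid xy\in R',\,x\in\{\varepsilon\}\}=\{y\mid y\in R'\}=R'$ for any language $R'$, and applying this with $R'=L(M)^*$ yields the claim. Once this reduction is in place, the lower bound $2^{n-1}+2^{n-2}$ follows immediately from the cited lower bound on the state complexity of $L(M)^*$, since any DFA accepting $L\backslash L(M)^*$ is a DFA accepting $L(M)^*$.

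The main point to be careful about is not a deep obstacle but a bookkeeping one: the lower bound from~\cite{YZS} is stated for the minimal DFA of $L(M)^*$, so I must phrase the conclusion as ``any DFA accepting $L(M)^*$ needs at least $2^{n-1}+2^{n-2}$ states,'' which is exactly what minimality of that count delivers. Because the operation degenerates so cleanly under $L=\{\varepsilon\}$, there is essentially no new combinatorial content here—the entire force of the theorem is inherited from the star-of-a-DFA lower bound—and the proof is therefore a short reduction rather than a fresh construction. This mirrors the structure of Theorem~\ref{Theorem2} and makes the matching upper and lower bounds transparent enough to read off the ratio bound for the approximation in the same way the paper does for the star of left quotient.
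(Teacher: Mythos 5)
Your proposal is correct and follows essentially the same route as the paper: take $L=\{\varepsilon\}$ so that $L\backslash L(M)^*=L(M)^*$, reuse the DFA $M$ from Theorem~\ref{Theorem2}, handle $n=2$ separately with the odd-number-of-$a$'s language, and inherit the $2^{n-1}+2^{n-2}$ lower bound from the cited result on the star operation. Nothing further is needed.
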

\begin{proof} For $n=2$, we still use $R=\{ w\in\{a,b\}^*|\#_a(w)\mbox{
is odd}\}$ which is accepted by a two-state DFA. $\{
\varepsilon\}\backslash R^*=R^*$ cannot be accepted by a DFA with
less than three states.

%Again we use DFA $M$ shown in Figure~\ref{Star} for any integer
%$n>2$. As stated before, it has been proved that the minimal DFA
%accepting $L(M)^*$ has $2^{n-1}+2^{n-2}$ states. So any DFA
%accepting $L\backslash L(M)^*$ needs at least $2^{n-1}+2^{n-2}$
%states.

Again we use the same DFA $M$ defined in the proof of
Theorem~\ref{Theorem2} for any integer $n>2$. As stated before, it
has been proved that the minimal DFA accepting $L(M)^*$ has
$2^{n-1}+2^{n-2}$ states. So any DFA accepting $L\backslash
L(M)^*$ needs at least $2^{n-1}+2^{n-2}$ states.
\end{proof}

For this combined operation, we choose $2^{n+1}$ to be an
approximation of its state complexity. Then the ratio bound can be
calculated easily as follows:
$$\frac{2^{n+1}}{2^{n-1}+2^{n-2}} = \frac{8}{3}\ \ .$$

\section{State complexity approximation of the catenation of regular languages}

As we know, the state complexity of the catenation of an
$n_1$-state DFA language and an $n_2$-state DFA language, $n_1\geq
1$ and $n_2\geq 2$, is $n_1 2^{n_2} - 2^{n_2-1}$ (\cite{YZS}). The
state complexity of multiple catenations has been studied in
~\cite{EGLY} and the following estimate was obtained.

\begin{claim}
\label{general-estimate} Let $R_1, \ldots, R_k$, $k\geq 2$, be
regular languages accepted by DFAs of $n_1, \ldots, n_k$ states,
respectively. Then the state complexity of $R_1\cdots R_k$ is no
more than
$$n_12^{n_2+\cdots +n_k}-2^{n_2+\cdots +n_k-1}-2^{n_3+\cdots
+n_k-1}-\cdots -2^{n_k-1}.$$
\end{claim}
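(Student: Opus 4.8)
The plan is to prove the bound by induction on $k$, using the known two-language catenation state complexity $n_1 2^{n_2} - 2^{n_2 - 1}$ as the base case and building a DFA for the product by a standard subset-style construction that keeps the bookkeeping tight enough to recover the telescoping sum of subtracted powers of two.

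\textbf{Base case and setup.} For $k = 2$ the claim reduces to the cited bound $n_1 2^{n_2} - 2^{n_2 - 1}$ from \cite{YZS}, which matches the stated formula since the sum $-2^{n_3 + \cdots + n_k - 1} - \cdots - 2^{n_k - 1}$ is empty. For the inductive step I would write $R_1 \cdots R_k = (R_1 \cdots R_{k-1}) R_k$ and assume the bound holds for the $(k-1)$-fold catenation $R_1 \cdots R_{k-1}$, giving a DFA $A$ with at most $N := n_1 2^{n_2 + \cdots + n_{k-1}} - 2^{n_2 + \cdots + n_{k-1} - 1} - \cdots - 2^{n_{k-1} - 1}$ states.

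\textbf{The construction.} First I would build an NFA for $R_1 \cdots R_k$ by placing $\varepsilon$-transitions from each final state of $A$ to the start state of the $n_k$-state DFA $B$ for $R_k$, then determinize. The crucial observation for a sharp count is that a reachable subset consists of at most one ``live'' state of $A$ (the $A$-component is deterministic until it reaches acceptance) together with a subset of states of $B$. When the $A$-component sits on a non-final state, the accompanying $B$-subset omits $B$'s initial state, contributing the factor that produces a $-2^{\cdots}$ term; when it sits on a final state, the $\varepsilon$-jump forces $B$'s initial state to be present. Accounting for these constraints, the number of reachable and distinguishable states is bounded by $N \cdot 2^{n_k}$ minus the correction coming from the final states of $A$. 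Carrying out this count should yield exactly
$$
\bigl(n_1 2^{n_2 + \cdots + n_{k-1}} - 2^{n_2 + \cdots + n_{k-1} - 1} - \cdots - 2^{n_{k-1} - 1}\bigr) 2^{n_k} - 2^{n_k - 1},
$$
and expanding the product distributes $2^{n_k}$ across every term of $N$, reproducing the target formula with one fresh term $-2^{n_k - 1}$ appended to the telescoping tail.

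\textbf{Main obstacle.} The hard part will be justifying the correction term precisely rather than merely bounding it: I must argue that the $-2^{n_k-1}$ savings genuinely arise at the top catenation level (from the classical argument that a final state of the last automaton collapses with the dead/initial configuration), while simultaneously confirming that multiplying the inductive DFA by $2^{n_k}$ faithfully scales each previously-earned subtraction. In other words, the delicate step is checking that the reachability and indistinguishability constraints compose cleanly across catenation levels so that no $-2^{\cdots}$ term is lost or double-counted. Since the statement is only an upper bound (a \emph{no more than} claim), I would not need to exhibit matching witnesses; it suffices to verify that every reachable, pairwise-distinguishable state of the determinized machine falls within the stated count, which the subset-structure analysis above delivers once the final-state correction is pinned down.
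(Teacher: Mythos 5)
The paper never actually proves this claim: it is quoted as an estimate obtained in \cite{EGLY}, and the machinery the paper itself develops afterwards (Theorem~\ref{lowerbound} and its companion upper bound) is a direct, one-shot construction on $Q_1\times 2^{Q_2}\times\cdots\times 2^{Q_k}$ that yields a strictly sharper count with the correction terms $D$ and $E_i$. Your route is genuinely different and, for this particular bound, cleaner: induct on $k$, write $R_1\cdots R_k=(R_1\cdots R_{k-1})R_k$, and apply the two-language bound $m2^{n}-2^{n-1}$ with $m=N$ and $n=n_k$; since
$$N2^{n_k}-2^{n_k-1}=n_12^{n_2+\cdots+n_k}-2^{n_2+\cdots+n_k-1}-\cdots-2^{n_{k-1}+n_k-1}-2^{n_k-1},$$
the formula telescopes exactly as you say, and the base case is the cited $k=2$ result. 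What your approach buys is brevity and the ability to treat the two-language case as a black box; what the direct construction buys is the finer accounting needed later for the exact complexity. Two remarks. First, your inductive step is easier than your ``main obstacle'' paragraph suggests: you need not re-examine reachability or distinguishability across catenation levels at all, because the inductive hypothesis hands you an honest DFA with at most $N$ states, and the standard two-language upper bound applies to any pair of complete DFAs whose first member has at least one final state (the case $R_1\cdots R_{k-1}=\emptyset$ being trivial). Second, your parenthetical assertion that when the $A$-component sits on a non-final state the accompanying $B$-subset \emph{omits} $B$'s initial state is false in general --- the initial state of $B$ may have been injected at an earlier visit to a final state of $A$ and persist afterwards --- but this does not damage the argument: the only saving you need is that final states of $A$ force $s_B$ into the subset, which halves the count over those states and yields the single $-2^{n_k-1}$.
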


%% This claim clearly holds when $k=2$. For $k>2$, the above bound is
%% obtained by induction using the formula $T(k)\leq
%% T(k-1)2^{n_k}-2^{n_k-1}$.

The exact state complexity of the catenations of three and four
regular languages was also obtained in ~\cite{EGLY}. In this
section, we prove the exact state complexities of the catenation
of $k$ regular languages for arbitrary $k\geq 2$. Note that this
is not a state complexity in the normal definition that is for
only one specific (combined) operation. This is a state complexity
(formula) for a class of (combined) operations.
%% It is also pretty close to the estimate above.

After we prove this state complexity, we show an approximation of
the complexity and state why the approximation is useful in this
case.

We first consider a lower bound.

\begin{theorem}
\label{lowerbound}

For any integers $n_i\geq 2$, $1\leq i\leq k$, there exist DFA
$A_i$ of $n_i$ states, respectively, such that any DFA accepting
$L(A_1)\cdots L(A_k)$ needs at least
\begin{eqnarray*}
 & & n_12^{n_2+\dots +n_k}-D-\sum_{i = 1}^{k-1} E_i
\end{eqnarray*}
states, where
\begin{eqnarray*}
D&=&n_1(2^{n_3+\dots +n_k}-1)+n_1(2^{n_2}-1)(2^{n_4+\dots
+n_k}-1)+\dots +n_1(2^{n_2}-1)\cdots(2^{n_{k-2}}-1)(2^{n_k}-1);\\
E_1&=& 1+(2^{n_2-1}-1)(1+(2^{n_3}-1)(1+(2^{n_4}-1)\cdots
(1+(2^{n_{k-1}}-1)2^{n_k})\ldots ));\\
E_2&=&(n_1-1)2^{n_2-1}(1+(2^{n_3-1}-1)(1+(2^{n_4}-1)\cdots
(1+(2^{n_{k-1}}-1)2^{n_k})\ldots ))\\
& &{}+2^{n_2-2}(1+(2^{n_3-1}-1)(1+(2^{n_4}-1)\cdots
(1+(2^{n_{k-1}}-1)2^{n_k})\ldots ));\\
%E_3&=&(n_1-1)(2^{n_2-1}-1)2^{n_3-1}(1+(2^{n_4-1}-1)(1+(2^{n_5}-1)\cdots
%(1+(2^{n_{k-1}}-1)2^{n_k})\ldots ))\\
%& &+(n_1-1)2^{n_2-1}2^{n_3-2}(1+(2^{n_4-1}-1)(1+(2^{n_5}-1)\cdots
%(1+(2^{n_{k-1}}-1)2^{n_k})\ldots ))\\
%& &+2^{n_2-2}2^{n_3-1}(1+(2^{n_4-1}-1)(1+(2^{n_5}-1)\cdots
%(1+(2^{n_{k-1}}-1)2^{n_k})\ldots ))\\
%& &+2^{n_2-2}2^{n_3-2}(1+(2^{n_4-1}-1)(1+(2^{n_5}-1)\cdots
%(1+(2^{n_{k-1}}-1)2^{n_k})\ldots ));\\
\ldots\\
%%%%%%%%%%%%%%%%%%%%%%%%%%%%%%E_i has been added here%%%%%%%%%%%%%%%%%%%%%%%%%%%%%%
E_i&=&(n_1-1)(2^{n_2-1}-1)\cdots 2^{n_i-1}(1+(2^{n_{i+1}-1}-1)(1+(2^{n_{i+2}}-1)\cdots 
(1+(2^{n_{k-1}}-1)2^{n_k})\ldots ))\\
& &{}+\dots +2^{n_2-2}\cdots 2^{n_i-2}(1+(2^{n_{i+1}-1}-1)(1+(2^{n_{i+2}}-1)\cdots 
(1+(2^{n_{k-1}}-1)2^{n_k})\ldots )).%;\\
%\ldots\\
%E_{k-1}&=&(n_1-1)(2^{n_2-1}-1)(2^{n_3-1}-1)\cdots
%(2^{n_{k-2}-1}-1)2^{n_{k-1}-1}2^{n_k-1}\\
%& & +\ldots +2^{n_2-2}2^{n_3-2}\cdots
%2^{n_{k-2}-2}2^{n_{k-1}-2}2^{n_k-1}.
\end{eqnarray*}
\end{theorem}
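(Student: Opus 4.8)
The plan is to establish the lower bound by exhibiting explicit witness DFAs and counting the reachable and pairwise inequivalent states of the subset automaton built for their catenation. For each $i$ I would take a complete DFA $A_i=(Q_i,\Sigma,\delta_i,0,F_i)$ with $Q_i=\{0,1,\dots,n_i-1\}$, a single final state, and a cyclic ``counting'' structure over an alphabet of at least two letters, in the style of the automaton used in the proof of Theorem~\ref{Theorem2}; the letters are chosen rich enough that, starting from any configuration, a suitable word can install any prescribed subset of each $Q_i$. Two elementary facts drive the whole argument: because each $\delta_i$ is a total function, the image of a nonempty subset of $Q_i$ is again nonempty, so once a component becomes nonempty it stays nonempty; and the only way a state of $A_{i+1}$ can be entered is through a final state of $A_i$.

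Next I would form the $\varepsilon$-NFA $N$ for $L(A_1)\cdots L(A_k)$ by adding $\varepsilon$-transitions from each final state of $A_i$ to the start state $0$ of $A_{i+1}$, and apply the subset construction. Every reachable state has the form $(p_1,S_2,\dots,S_k)$ with $p_1\in Q_1$ and $S_i\subseteq Q_i$, giving the crude count $n_1 2^{n_2+\cdots+n_k}$. From this I would subtract two disjoint families of configurations that can never be reached. The \emph{gap} configurations are those in which some $S_j$ is empty while a later component is nonempty; by the two facts above the nonempty components must form a prefix of $S_2,\dots,S_k$, so every gap configuration is unreachable, and summing over the position of the first gap yields exactly $D$. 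The \emph{forcing} configurations are those in which a required start state is missing: $p_1\in F_1$ forces $0\in S_2$, and $S_i\cap F_i\neq\emptyset$ forces $0\in S_{i+1}$. Counting the violations of these constraints, stratified by the deepest active level $i$ and with the earlier components constrained to be nonempty so as not to recount gap configurations, produces precisely the nested expressions $E_1,\dots,E_{k-1}$.

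It then remains to show, first, that every configuration surviving both filters is genuinely reachable, and second, that distinct survivors are pairwise inequivalent. For reachability I would argue by induction on the component index, using the cyclic structure to position $p_1$, a word through the final state of $A_1$ to inject $0$ into $S_2$, and then the richness of the letters to realize the prescribed $S_2$, and so on up to $S_k$. For distinguishability I would show that for any two distinct survivors there is a coordinate in which they differ and a word, exploiting that from every state of each $A_i$ one can reach its final state and continue so as to complete the catenation, that is accepted from exactly one of the two states. By the Myhill--Nerode theorem the minimal DFA for $L(A_1)\cdots L(A_k)$ then has at least $n_1 2^{n_2+\cdots+n_k}-D-\sum_{i=1}^{k-1}E_i$ states; together with Claim~\ref{general-estimate} this pins down the exact complexity.

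The step I expect to be the main obstacle is the reachability count, in particular separating the gap-unreachable configurations counted by $D$ from the forcing-unreachable ones counted by $\sum_{i=1}^{k-1}E_i$ with no overlap. This inclusion--exclusion is exactly what forces the unwieldy nested form of the $E_i$: at level $i$ one must simultaneously require $S_2,\dots,S_{i-1}$ to be nonempty, record whether $A_1$ itself contributes a final state (the origin of the split into the $(n_1-1)$-type and the $2^{n_2-2}$-type summands), and leave the deeper components free apart from their own forcing and gap conditions. Verifying this combinatorial identity for arbitrary $k$, and checking that the chosen words really do realize every admissible subset of each $Q_i$, is the technical heart of the proof; the already-known cases $k=2,3,4$ from~\cite{EGLY} serve as a consistency check.
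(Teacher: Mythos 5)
Your proposal follows essentially the same route as the paper: the same style of cyclic witness automata over a rich alphabet, the same identification of the unreachable configurations as the ``gap'' family (counted by $D$) and the ``forced start state'' families (counted by the $E_i$, stratified so as not to overlap each other or $D$), the same explicit-word reachability argument, and the same distinguishing words driving each component to its final state to separate profiles. The paper phrases the count in terms of Myhill--Nerode classes of words indexed by the profiles $P_1(x),\ldots,P_k(x)$ rather than subset-automaton states, but that is only a change of presentation, not of substance.
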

\begin{proof} Let $\Sigma=\{a_j\mid 1\leq j \leq 2k-1 \}$. Define a DFA
$A_1=(Q_1,\Sigma , \delta _1, 0, F_1)$, where 
\begin{align*}
Q_1 &= \{ 0, 1, \ldots , n_1\!-\!1\},\\
F_1 &= \{ n_1-1\},\\
\delta _1 (t,a_{1}) &= t+1 \mod n_1,\ 0\leq t\leq n_1-1, \\
\delta _1 (t,a_{2k-2}) &= 0,\ 0\leq t\leq n_1-1,\\ 
\delta _1 (t,b) &= t, \ b\in \Sigma-\{a_{1}, a_{2k-2}\},\ 0\leq t\leq n_1-1.
\end{align*}
%\begin{eqnarray*}
%Q_1 & = & \{ 0, 1, \ldots , n_1-1\};\\
%F_1 & = & \{ n_1-1\};\\
%\delta _1 (t,a_{1}) & = & t+1 \ \ {\rm{ mod }} \ \ n_1, \ \
%0\leq t\leq n_1-1;\\
%\delta _1 (t,a_{2k-2}) & = & 0, \ \ 0\leq t\leq n_1-1;\\
%\delta _1 (t,b) & = & t, \ \ b\in \Sigma-\{a_{1}, a_{2k-2}\}, \ \
%0\leq t\leq n_1-1.
%\end{eqnarray*}
%Figure~\ref{Ck1} shows the transition diagram of $A_1$.
%\begin{figure}[ht]
%  \centering
%  \includegraphics[scale=0.75]{Ck1_1.eps}
%  \caption{The transition diagram of DFA $A_1$}
%\label{Ck1}
%\end{figure}

Let DFA $A_i=(Q_i,\Sigma , \delta _i, 0, F_i)$, $2\leq i \leq k$,
where 
\begin{align*}
Q_i &= \{ 0, 1, \ldots , n_i-1\},\\
F_i &= \{ n_i-1\},\\
\delta _i (t,a_{2i-2}) &= t+1 \mod n_i,\ 0\leq t\leq n_i-1,\\
\delta _i (t,a_{2i-1}) &= 1,\ 0\leq t\leq n_i-1,\\
\delta _i (t,b) &= t, \ b\in \Sigma-\{a_{2i-2}, a_{2i-1}\}, \ 0\leq t\leq n_i-1.
\end{align*}
%\begin{eqnarray*}
%Q_i & = & \{ 0, 1, \ldots , n_i-1\};\\
%F_i & = & \{ n_i-1\};\\
%\delta _i (t,a_{2i-2}) & = & t+1 \ \ {\rm{ mod }} \ \ n_i, \ \
%0\leq t\leq n_i-1;\\
%\delta _i (t,a_{2i-1}) & = & 1, \ \ 0\leq t\leq n_i-1;\\
%\delta _i (t,b) & = & t, \ \ b\in \Sigma-\{a_{2i-2}, a_{2i-1}\}, \
%\ 0\leq t\leq n_i-1.
%\end{eqnarray*}
%Figure~\ref{Cki} shows the transition diagram of $A_i$.
%\begin{figure}[ht]
%  \centering
%  \includegraphics[scale=0.75]{Ckismall_1.eps}
%  \caption{The transition diagram of DFA $A_i$}
%\label{Cki}
%\end{figure}

For each $x\in \{a_1, a_2, a_4, \ldots, a_{2k-2} \}^*$ and $2\leq
s \leq k$, we define
$$
P_{s}(x)  =  \{ p\mid x=u_1u_2\ldots u_s, \ u_l\in L(A_l), 
1\leq l\leq s-1, \ and \ p=\#_{a_{2s-2}}(u_s) \mod n_s \}.
$$
%$2\leq s \leq k$.

Consider that $x,y\in \{a_1, a_2, a_4, \ldots, a_{2k-2} \}^*$ such
that $P_s(x)\neq P_s(y)$. Let $c\in P_s(x)-P_s(y)$ (or
$P_s(y)-P_s(x)$) and
$w=a_{2s-2}^{n_s-1-c}a_{2s+1}a_{2s}^{n_{s+1}-1}\cdots
a_{2k-1}a_{2k-2}^{n_{k}-1}$. Then it is clear that $xw\in
L(A_1)\cdots L(A_k)$ but $yw \notin L(A_1)\cdots L(A_k)$. So, $x$
and~$y$ are in different equivalence classes of the
right-invariant relation induced by $L(A_1)\cdots L(A_k)$.

For each $x\in \{a_1, a_2, a_4, \ldots, a_{2k-2} \}^*$, define
\begin{eqnarray*}
P_1(x) & = &  \#_{a_1}(z) \text{ where $x=ydz$, $y\in \{a_1, a_2, a_4, \ldots, a_{2k-2} \}^*$},\\
       &   &  \qquad z\in \{a_1, a_2, a_4, \ldots, a_{2k-4} \}^*, \text{ if $a_{2k-2}$ occurs in $x$};\\
P_1(x) & = & \#_{a_1}(x), \text{ otherwise}.
\end{eqnarray*}

Consider $u,v\in \{a_1, a_2, a_4, \ldots, a_{2k-2} \}^*$ such that
\hbox{$P_1(u) \mod n_1 > P_1(v) \mod n_1$}.\\
Let $i=P_1(u)$ mod $n_1$
and $w=a_1^{n_1-1-i}a_3a_{2}^{n_2-1}\cdots
a_{2k-1}a_{2k-2}^{n_{k}-1}$. Then clearly $uw\in L(A_1)\cdots
L(A_k)$ but $vw\notin L(A_1)\cdots L(A_k)$.

Notice that there does not exist a word $w$ such that $0\notin
P_{2}(w)$ and $P_1(w)=n_1-1$, since\linebreak \hbox{$P_1(w)=n_1-1$} guarantees that
$0\in P_{2}(w)$. Because of the same reason, there does not exist
a word~$w$ such that $n_{t}-1\in P_{t}(w)$ and $0 \notin
P_{t+1}(w)$, $2\leq t \leq k-1$. It is also impossible that
$P_{t}(w)=\emptyset$ but $P_{t+1}(w)\neq \emptyset$.

For each subset $p_s=\{d_{1,s},\ldots , d_{e_s,s}\}$ of
$\{0,\ldots ,n_s-1\}$ where $d_{1,s}>\cdots >d_{e_s,s}$ and $2\leq
s\leq k$, and an integer $p_1\in \{0,\ldots , n_1-1\}$, except the
cases we mentioned above, there exists a word
\begin{eqnarray*}
x & = & a_{1}^{n_1}a_{2}^{n_2}a_{4}^{n_3}\cdots a_{2k-4}^{n_{k-1}}a_{2k-2}^{d_{1,k}-d_{2,k}}a_{1}^{n_1}a_{2}^{n_2}a_{4}^{n_3}\cdots a_{2k-4}^{n_{k-1}}a_{2k-2}^{d_{2,k}-d_{3,k}}\cdots \\
 & & a_{1}^{n_1}a_{2}^{n_2}a_{4}^{n_3}\cdots a_{2k-4}^{n_{k-1}}a_{2k-2}^{d_{e_k-1,k}-d_{e_k,k}}a_{1}^{n_1}a_{2}^{n_2}a_{4}^{n_3}\cdots a_{2k-4}^{n_{k-1}}a_{2k-2}^{d_{e_k,k}} \\
 & & a_{1}^{n_1}a_{2}^{n_2}a_{4}^{n_3}\cdots a_{2k-4}^{d_{1,k-1}-d_{2,k-1}}\cdots a_{1}^{n_1}a_{2}^{n_2}a_{4}^{n_3}\cdots a_{2k-4}^{d_{e_{k-1},k-1}}\cdots \\
 & & a_{1}^{n_1}a_{2}^{d_{1,2}-d_{2,2}}\cdots a_{1}^{n_1}a_{2}^{d_{e_2,2}}a_{1}^{p_1}.
\end{eqnarray*}
such that $P_1(x)=p_1$ and $P_s(x)=p_s$.

In total, there are $n_12^{n_2}2^{n_3}\cdots 2^{n_k}$ classes.
There are
\begin{eqnarray*}
D & = & n_1(2^{n_3+\dots +n_k}-1)+n_1(2^{n_2}-1)(2^{n_4+\dots
+n_k}-1)+\dots +n_1(2^{n_2}-1)\cdots (2^{n_{k-2}}-1)(2^{n_k}-1)
\end{eqnarray*}
classes with both $p_{t}=\emptyset$ and $p_{t+1}\neq \emptyset$,
$2\leq t \leq k-1$. There are
\begin{eqnarray*}
E_1=(1+(2^{n_2-1}-1)(1+(2^{n_3}-1)(1+(2^{n_4}-1)\cdots
(1+(2^{n_{k-1}}-1)2^{n_k})\ldots ))
\end{eqnarray*}
classes with both $p_1=n_1-1$ and $0\notin p_2$. There are
\begin{eqnarray*}
E_2&=& (n_1-1)2^{n_2-1}(1+(2^{n_3-1}-1)(1+(2^{n_4}-1)\cdots
(1+(2^{n_{k-1}-1})2^{n_k})\ldots
 ))\\
   & & {}+2^{n_2-2}(1+(2^{n_3-1}-1)(1+(2^{n_4}-1)\cdots (1+(2^{n_{k-1}}-1)2^{n_k})\ldots
 ))
\end{eqnarray*}
classes with both $n_2-1\in p_2$ and $0\notin p_3$, which are not
in $E_1$.
%There are
%\begin{eqnarray*}
% E_3  &=& (n_1-1)(2^{n_2-1}-1)2^{n_3-1}(1+(2^{n_4-1}-1)(1+(2^{n_5}-1)\cdots (1+(2^{n_{k-1}}-1)2^{n_k})\ldots
% ))\\
%     & & +(n_1-1)2^{n_2-1}2^{n_3-2}(1+(2^{n_4-1}-1)(1+(2^{n_5}-1)\cdots (1+(2^{n_{k-1}}-1)2^{n_k})\ldots
% ))\\
%      & & +2^{n_2-2}2^{n_3-1}(1+(2^{n_4-1}-1)(1+(2^{n_5-1})\cdots (1+(2^{n_{k-1}}-1)2^{n_k})\ldots
% ))\\
%      & & +2^{n_2-2}2^{n_3-2}(1+(2^{n_4-1}-1)(1+(2^{n_5-1})\cdots (1+(2^{n_{k-1}}-1)2^{n_k})\ldots
% ))
%\end{eqnarray*}
%classes with both $n_3-1\in p_3$ and $0\notin p_4$, which are not
%in $E_1, E_2$.
We omit the other similar classes until the $i$th group of
classes. There are
%%%%%%%%%%%%%%%%%%%%%%%%%%%%%%E_i has been added here%%%%%%%%%%%%%%%%%%%%%%%%%%%%%%
\begin{eqnarray*}
E_i&=&(n_1-1)(2^{n_2-1}-1)\cdots 2^{n_i-1}(1+(2^{n_{i+1}-1}-1)(1+(2^{n_{i+2}}-1)\cdots (1+(2^{n_{k-1}}-1)2^{n_k})\ldots ))\\
& &{}+\dots +2^{n_2-2}\cdots 2^{n_i-2}(1+(2^{n_{i+1}-1}-1)(1+(2^{n_{i+2}}-1)\cdots (1+(2^{n_{k-1}}-1)2^{n_k})\ldots ))
\end{eqnarray*}
classes with both $n_i-1\in p_i$ and $0\notin p_{i+1}$, which are
not in $E_1, E_2, \ldots , E_{i-1}$.

%We omit the other similar classes until the last group of classes.
%There are
%\begin{eqnarray*}
%        E_{k-1}&=& (n_1-1)(2^{n_2-1}-1)(2^{n_3-1}-1)\cdots (2^{n_{k-2}-1}-1)2^{n_{k-1}-1}2^{n_k-1}
%\\
%          & & +\ldots  +2^{n_2-2}2^{n_3-2}\cdots 2^{n_{k-2}-2}2^{n_{k-1}-2}2^{n_k-1}
%\end{eqnarray*}
%classes with both $n_{k-1}-1\in p_{k-1}$ and $0\notin p_k$, which
%are not in $E_1, E_2,\ldots , E_{k-2}$.

Thus, there are at least 
$n_12^{n_2+\ldots +n_k}-D-E_1-E_2-\dots-E_{k-1}$ distinct equivalence classes. \end{proof}

%Thus, there are at least
%\begin{eqnarray*}
% & & n_12^{n_2+\ldots +n_k}-D-\sum_{i = 1}^{k-1} E_i
%\end{eqnarray*}
%distinct equivalence classes.
%\end{proof}

\begin{theorem}
Let $A_i$, $1\leq i\leq k$ be $k$ DFAs of $n_i$, respectively,
where $A_i$ has $f_i$ final states, $0<f_i <n_i$. Then there
exists a DFA of
\begin{eqnarray*}
 & & n_12^{n_2+\dots +n_k}-D-\sum_{i = 1}^{k-1} E_i
\end{eqnarray*}
states that accepts $L(A_1)\cdots L(A_k)$, where{\footnotesize
\begin{eqnarray*}
D&=&n_1(2^{n_3+\dots +n_k}-1)+n_1(2^{n_2}-1)(2^{n_4+\dots
+n_k}-1)+\dots +n_1(2^{n_2}-1)\cdots (2^{n_{k-2}}-1)(2^{n_k}-1);\\
E_1&=& f_1(1+(2^{n_2-1}-1)(1+(2^{n_3}-1)(1+(2^{n_4}-1)\cdots
(1+(2^{n_{k-1}}-1)2^{n_k})\ldots ));\\
E_2&=&(n_1-f_1)(2^{f_2}-1)2^{n_2-f_2}(1+(2^{n_3-1}-1)(1+(2^{n_4}-1)\cdots
(1+(2^{n_{k-1}}-1)2^{n_k})\ldots ))\\
& &+f_1(2^{f_2}-1)2^{n_2-f_2-1}(1+(2^{n_3-1}-1)(1+(2^{n_4}-1)
\cdots
(1+(2^{n_{k-1}}-1)2^{n_k})\ldots ));\\
%E_3&=&(n_1-f_1)(2^{n_2-f_2}-1)(2^{f_3}-1)2^{n_3-f_3}(1+(2^{n_4-1}-1)(1+(2^{n_5}-1)\cdots \\
%& &(1+(2^{n_{k-1}}-1)2^{n_k})\ldots ))\\
%& &
%+(n_1-f_1)(2^{f_2}-1)2^{n_2-f_2}(2^{f_3}-1)2^{n_3-f_3-1}(1+(2^{n_4-1}-1)
%(1+(2^{n_5}-1)\cdots \\
%& &(1+(2^{n_{k-1}}-1)2^{n_k})\ldots))\\
%& &
%+f_12^{n_2-f_2-1}(2^{f_3}-1)2^{n_3-f_3}(1+(2^{n_4-1}-1)(1+(2^{n_5}-1)\cdots (1+(2^{n_{k-1}}-1)2^{n_k})\ldots ))\\
%& &
%+f_1(2^{f_2}-1)2^{n_2-f_2-1}(2^{f_3}-1)2^{n_3-f_3-1}(1+(2^{n_4-1}-1)(1+(2^{n_5}-1)\cdots \\
%& &(1+(2^{n_{k-1}}-1)2^{n_k})\ldots ));\\
\ldots\\
%%%%%%%%%%%%%%%%%%%%%%%%%%%%%%E_i has been added here%%%%%%%%%%%%%%%%%%%%%%%%%%%%%%
E_i&=&(n_1-f_1)(2^{n_2-f_2}-1)\cdots(2^{f_i}-1)2^{n_i-f_i}(1+(2^{n_{i+1}-1}-1)(1+(2^{n_{i+2}}-1)\cdots 
\quad(1+(2^{n_{k-1}}-1)2^{n_k})\ldots ))+\cdots\\
& &{}+f_1(2^{f_2}-1)2^{n_2-f_2-1}\cdots(2^{f_i}-1)2^{n_i-f_i-1}(1+(2^{n_{i+1}-1}-1)(1+(2^{n_{i+2}}-1)\cdots 
\quad(1+(2^{n_{k-1}}-1)2^{n_k})\ldots )).%;\\
%\ldots \\
%E_{k-1}&=&(n_1-f_1)(2^{n_2-f_2}-1)(2^{n_3-f_3}-1)\cdots (2^{n_{k-2}-f_{k-2}}-1)(2^{f_{k-1}}-1)2^{n_{k-1}-f_{k-1}}2^{n_k-1}\\
%& & +\ldots\\
%& & +f_1(2^{f_2}-1)2^{n_2-f_2-1}(2^{f_3}-1)2^{n_3-f_3-1}\cdots
%(2^{f_{k-2}}-1)2^{n_{k-2}-f_{k-2}-1}(2^{f_{k-1}}-1)\\
%& & 2^{n_{k-1}-f_{k-1}-1}2^{n_k-1}.
\end{eqnarray*}}
\end{theorem}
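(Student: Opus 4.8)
The plan is to establish this upper bound as the constructive counterpart of the lower bound in Theorem~\ref{lowerbound}, now generalized to automata with arbitrary numbers of final states. First I would build a single NFA $N$ recognizing the catenation: take the disjoint union of $A_1,\ldots,A_k$, add an $\varepsilon$-transition from every final state of $A_i$ to the initial state of $A_{i+1}$ for $1\le i\le k-1$, keep the initial state of $A_1$ as the unique initial state, and keep the final states of $A_k$ as the accepting states. This $N$ plainly accepts $L(A_1)\cdots L(A_k)$, and once the $\varepsilon$-transitions are resolved its active-state sets have a rigid shape.

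Next I would apply the subset construction and describe the reachable macrostates. Since $A_1$ is deterministic and has no incoming $\varepsilon$-transition, exactly one state of $A_1$ is active at any moment, whereas the $\varepsilon$-transitions can inject the initial state of each downstream automaton, so the active set inside each $A_s$ with $s\ge 2$ can be an arbitrary subset. Hence every reachable macrostate is a tuple $\langle q_1,S_2,\ldots,S_k\rangle$ with $q_1\in Q_1$ and $S_s\subseteq Q_s$, giving the crude count $n_12^{n_2+\cdots+n_k}$.

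The heart of the proof is to trim this count to the stated value by excising the macrostates that can never arise. Two independent structural constraints apply. The \emph{cascade constraint}: one cannot activate a state of $A_{t+1}$ without having passed through $A_t$, so $S_{t+1}\neq\emptyset$ forces $S_t\neq\emptyset$; the tuples violating this (some $S_t=\emptyset$ while $S_{t+1}\neq\emptyset$) number exactly $D$. The \emph{forcing constraint}: whenever $S_t$ contains a final state of $A_t$ (or $q_1$ is final, for $t=1$), the $\varepsilon$-transition guarantees the initial state of $A_{t+1}$ lies in $S_{t+1}$; the violating tuples, counted so as not to repeat those already removed at smaller indices, contribute exactly $E_i$ at index $i=t$. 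I would verify both counts by direct enumeration, recording for each $s$ how many subsets $S_s$ do, respectively do not, contain a final state of $A_s$ and/or the initial state of $A_s$ — this is precisely where the factors $2^{f_s}-1$, $2^{n_s-f_s}$, and $2^{n_s-f_s-1}$, together with the nested tail $(1+(2^{n_{i+1}-1}-1)(\cdots))$, originate. Subtracting $D$ and $\sum_{i=1}^{k-1}E_i$ from $n_12^{n_2+\cdots+n_k}$ then leaves exactly the claimed number of reachable macrostates.

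The main obstacle I expect is the inclusion--exclusion bookkeeping for the forcing constraint: the tuples violating it at different indices $t$ overlap with one another and with the cascade tuples counted in $D$, which is why each $E_i$ must be read as "violations at index $i$ not already removed at indices $1,\ldots,i-1$." Making the nested products in $E_i$ come out exactly — in particular the split between the leading $(n_1-f_1)$-type term and the trailing $f_1$-type term, which records whether the first automaton has already forced its successor's initial state — is the delicate point; everything else reduces to the same geometric-series bookkeeping that drives the lower-bound argument.
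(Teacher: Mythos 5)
Your proposal is correct and follows essentially the same route as the paper: the paper writes down the determinized catenation automaton $E$ directly on tuples $\langle u_1,u_2,\ldots,u_k\rangle$ with $u_1\in Q_1$ and $u_i\subseteq Q_i$, and excludes exactly your two families of unreachable states --- the ``cascade'' tuples (an empty component followed by a non-empty one, counted by $D'$) and the ``forcing'' tuples (a component meeting $F_i$ whose successor component neither contains the initial state nor is empty, counted by $E_i'$ after removing overlaps with smaller indices). Your NFA-with-$\varepsilon$-transitions-then-subset-construction presentation is just the same construction described in two steps instead of one.
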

\begin{proof} Let DFA $A_i=(Q_i,\Sigma , \delta _i, 0, F_i)$, $1\leq i \leq k$. 

Construct $E=(Q_E,\Sigma, \delta_E,q_0,F_E)$ such that
\begin{eqnarray*}
Q_E & = & Q_{1}\times 2^{Q_{2}} \times 2^{Q_{3}} \times \cdots
\times 2^{Q_{k}}-D'-\sum_{i = 1}^{k-1} E_i';\\
q_0 & = & \begin{cases} \langle 0,\emptyset ,\ldots , \emptyset \rangle , & \text{if $0\notin F_i$, $1\leq i\leq k$};\\
              \langle 0,\{0\} ,\ldots ,  \emptyset \rangle , & \text{if $0 \in F_1$ and $0\notin F_i$, $2\leq i\leq k$};\\
              \ldots &\\
              \langle 0,\{0\} ,\ldots ,  \{0\} \rangle , & \text{if $0 \in F_i$, $1\leq i\leq k-1$};
          \end{cases}\\
%q_0 & = & \langle 0,\emptyset ,\ldots , \emptyset \rangle , \ \ {\rm{if}} \ \ 0\notin F_i,  \ \  1\leq i\leq k;\\
%& = & \langle 0,\{0\} ,\ldots ,  \emptyset \rangle , \ \ {\rm{if}} \ \ 0 \in F_1, \ \ {\rm{and}} \ \ 0\notin F_i,  \ \  2\leq i\leq k;\\
%& & \ldots \\
%& = & \langle 0,\{0\} ,\ldots ,  \{0\} \rangle , \ \ {\rm{if}} \ \ 0 \in F_i, \ \  1\leq i\leq k-1;\\
F_E & = & \{ \langle u_1,u_2, \ldots ,u_k \rangle \in Q_E \mid u_k\cap F_k\neq \emptyset\};\\
\delta_E & : & \delta_E( \langle u_1, u_2, \ldots ,u_k \rangle ,a)
= \langle u_1',u_2', \ldots ,u_k' \rangle , \text{ for $a\in \Sigma$, where} \\
& & u_1'=\delta_{A_1}(u_1,a),\\
& & u_2' = \begin{cases} \delta_{A_2}(u_2,a)\cup \{0\}, & \text{if $u_{1}'\in F_{1}$}, \\
                         \delta_{A_2}(u_2,a),           & \text{otherwise},
           \end{cases}\\
& & u_i' = \begin{cases} \delta_{A_i}(u_i,a)\cup \{0\}, & \text{if $u_{i-1}'\cap F_{i-1} \neq \emptyset$}, \\
                         \delta_{A_i}(u_i,a),           & \text{otherwise},
           \end{cases} \quad \text{for $3\leq i\leq k$}
\end{eqnarray*}
%$\delta_E : \delta_E( \langle u_1, u_2, \ldots ,u_k \rangle ,a) =
%\langle u_1',u_2', \ldots ,u_k' \rangle$, for $a\in \Sigma$, where
%$u_1'=\delta_{A_1}(u_1,a)$; $u_2' = \delta_{A_2}(u_2,a)\cup \{0\}$
%if $u_{1}'\in F_{1}$, $u_2' = \delta_{A_2}(u_2,a)$ otherwise;
%$u_i' = \delta_{A_i}(u_i,a)\cup \{0\}$ if $u_{i-1}'\cap F_{i-1}
%\neq \emptyset$, $u_i' = \delta_{A_i}(u_i,a)$ otherwise; $3\leq
%i\leq k$.
where{\footnotesize
\begin{eqnarray*}
D' & = & Q_{1}\times \{\emptyset\} \times  ( 2^{Q_{3}}\times \cdots \times 2^{Q_{k}}-\{\emptyset\}^{k-2})
   {}+Q_{1}\times (2^{Q_{2}}-\{\emptyset\}) \times \{\emptyset\} \times (2^{Q_{4}}\times \cdots \times 2^{Q_{k}}-\{\emptyset\}^{k-3})+\cdots \\
   & & {}+Q_{1}\times (2^{Q_{2}}-\{\emptyset\}) \times \cdots \times (2^{Q_{k-2}}-\{\emptyset\})\times \{\emptyset\}\times (2^{Q_{k}}-\{\emptyset\});\\
E_1' & = & F_1\times (\{\emptyset\}^{k-1}\cup (2^{Q_{2}-\{0\}}-\{\emptyset\})\times (\{\emptyset\}^{k-2}\cup (2^{Q_{3}}-\{\emptyset\})\times \cdots 
\times (\{\emptyset\}^2\cup (2^{Q_{k-1}}-\{\emptyset\})2^{Q_{k}})\ldots ));\\
E_2' & = & (Q_{1}-F_1)\times ((2^{F_2}-\{\emptyset\})\cup 2^{Q_{2}-F_2})\times (\{\emptyset\}^{k-2}\cup (2^{Q_{3}-\{0\}}-\{\emptyset\})\times \cdots 
{}\times (\{\emptyset\}^2\cup (2^{Q_{k-1}}-\{\emptyset\})2^{Q_{k}})\ldots ))\\
   & & {}+F_1\times ((2^{F_2}-\{\emptyset\})\cup 2^{Q_{2}-F_2-\{0\}})\times (\{\emptyset\}^{k-2}\cup (2^{Q_{3}-\{0\}}-\{\emptyset\})\times \cdots 
   {}\times (\{\emptyset\}^2\cup (2^{Q_{k-1}}-\{\emptyset\})2^{Q_{k}})\ldots ));\\
%E_3' & = &  (Q_{1}-F_1)\times (2^{Q_{2}-F_2}-\{\emptyset\})\times ((2^{F_3}-\{\emptyset\})\cup 2^{Q_{3}-F_3})\times (\{\emptyset\}^{k-3}\cup (2^{Q_{4}-\{0\}}-\{\emptyset\})\\
%                & & \times \cdots \times (\{\emptyset\}^2\cup (2^{Q_{k-1}}-\{\emptyset\})2^{Q_{k}})\ldots ))\\
%                    & & +(Q_{1}-F_1)\times ((2^{F_2}-\{\emptyset\})\cup 2^{Q_{2}-F_2})\times ((2^{F_3}-\{\emptyset\})\cup 2^{Q_{3}-F_3-\{0\}})\times (\{\emptyset\}^{k-3}\\
%                        & & \cup  (2^{Q_{4}-\{0\}}-\{\emptyset\}) \times \cdots \times (\{\emptyset\}^2\cup (2^{Q_{k-1}}-\{\emptyset\})2^{Q_{k}})\ldots ))\\
%                            & & +F_1\times 2^{Q_{2}-F_2-\{0\}}\times ((2^{F_3}-\{\emptyset\})\cup 2^{Q_{3}-F_3})\times (\{\emptyset\}^{k-3}\cup (2^{Q_{4}-\{0\}}-\{\emptyset\}) \\
%                                & & \times \cdots \times (\{\emptyset\}^2\cup (2^{Q_{k-1}}-\{\emptyset\})2^{Q_{k}})\ldots ))\\
%                                    & & +F_1\times ((2^{F_2}-\{\emptyset\})\cup 2^{Q_{2}-F_2-\{0\}})\times ((2^{F_3}-\{\emptyset\})\cup 2^{Q_{3}-F_3-\{ 0 \}})\times (\{\emptyset\}^{k-3}\\
%                                        & & \cup (2^{Q_{4}-\{0\}}-\{\emptyset\}) \times \cdots \times (\{\emptyset\}^2\cup (2^{Q_{k-1}}-\{\emptyset\})2^{Q_{k}})\ldots ));\\
 \ldots \\
 %%%%%%%%%%%%%%%%%%%%%%%%%%%%%%E_i' has been added here%%%%%%%%%%%%%%%%%%%%%%%%%%%%%%
 E_i' & = &  (Q_{1}-F_1)\times (2^{Q_{2}-F_2}-\{\emptyset\})\times  \cdots \times ((2^{F_i}-\{\emptyset\})\cup 2^{Q_{i}-F_i})\\
   & & \times (\{\emptyset\}^{k-i}\cup (2^{Q_{i+1}-\{0\}}-\{\emptyset\}) \times \cdots \times (\{\emptyset\}^2\cup (2^{Q_{k-1}}-\{\emptyset\})2^{Q_{k}})\ldots ))\\
   & & {}+\cdots 
   {}+F_1\times ((2^{F_2}-\{\emptyset\})\cup 2^{Q_{2}-F_2-\{0\}})\times \cdots \times ((2^{F_i}-\{\emptyset\})\cup 2^{Q_{i}-F_i-\{ 0 \}})\times (\{\emptyset\}^{k-i}\\
   & & {}\cup (2^{Q_{i+1}-\{0\}}-\{\emptyset\}) \times \cdots \times (\{\emptyset\}^2\cup (2^{Q_{k-1}}-\{\emptyset\})2^{Q_{k}})\ldots )).%;\\
% \ldots \\
%E_{k-1}' & = &  (Q_{1}-F_1)\times (2^{Q_{2}-F_2}-\{\emptyset\})\times (2^{Q_{3}-F_3}-\{\emptyset\})\times \cdots \times (2^{Q_{k-2}-F_{k-2}}-\{\emptyset\}) \\
%                                    & &  \times ((2^{F_{k-1}}-\{\emptyset\})\cup 2^{Q_{k-1}-F_{k-1}}) \times 2^{Q_{k}-\{ 0 \}}\\
%                                      & & +\ldots \\
%                                        & & +F_1\times ((2^{F_2}-\{\emptyset\})\cup 2^{Q_{2}-F_2-\{0\}})\times ((2^{F_3}-\{\emptyset\})\cup 2^{Q_{3}-F_3-\{0\}})\times \cdots \\
%                                          & & \times ((2^{F_{k-2}}-\{\emptyset\})\cup 2^{Q_{k-2}-F_{k-2}-\{0\}})  \times ((2^{F_{k-1}}-\{\emptyset\})\cup 2^{Q_{k-1}-F_{k-1}-\{ 0 \}}) \times 2^{Q_{k}-\{ 0 \}}.
\end{eqnarray*}}

Intuitively, $Q_E$ is a set of $k$-tuples whose first component is
a state in $Q_1$ and the $i$th component is a subset of states in
$Q_i$, $2\leq i\leq k$.

$Q_E$ does not contain those $k$-tuples whose $i$th component is
$\emptyset$ and whose $j$th component is not $\emptyset$, when $1<
i<j \leq k$. $D'$ is the set of them.

$Q_E$ does not contain those $k$-tuples whose first component is
an element of $F_1$ and whose second component is not $\emptyset$
(if it is $\emptyset$ then all the elements afterward have to be
$\emptyset$) and does not contain $0$, either. $E_1'$ is the set
of them.

$Q_E$ does not contain those $k$-tuples whose $i$th component
contains one or more final states of DFA $A_i$ and whose $(i+1)$th
component is not $\emptyset$ (if it is $\emptyset$ then all the
elements afterward have to be $\emptyset$) and does not contain
$0$, when $2\leq i \leq k-1$, either. $E_i'$ is the set of them.

Clearly, $L(E)=L(A_1)\cdots L(A_k)$. Let $|Q_{A_i}|=n_i$ and
$|F_{A_i}|=f_i$, $1\leq i\leq k$.\\
Then $E$ has 
$n_12^{n_2+\dots+n_k}-D-E_1-E_2-\dots - E_{k-1}$ states. \end{proof}

%Clearly, $L(E)=L(A_1)\cdots L(A_k)$. Let $|Q_{A_i}|=n_i$ and
%$|F_{A_i}|=f_i$, $1\leq i\leq k$. Then $E$ has the following
%number of states:
%$$ n_12^{n_2+\ldots +n_k}-D-\sum_{i = 1}^{k-1} E_i $$\end{proof}
%\begin{eqnarray*}
% & & n_12^{n_2+\ldots +n_k}-D-\sum_{i = 1}^{k-1} E_i.
%\end{eqnarray*}
%\end{proof}

Note that when each $A_i$, $1\leq i\leq k$, has one final state,
this upper bound is exactly the same as the lower bound stated in
Theorem~\ref{lowerbound}. Thus, this bound is tight and is the
state complexity of the catenation of $k$ regular languages.

Although we have proved that this state complexity is tight, it is
too long and complex to be intuitive and comprehensible. Let
$SC_{CAT}(n_1, \ldots, n_k)$ denote the state complexity of
catenation of $k$ languages accepted by $n_1$-state, $\ldots$,
$n_k$-state DFAs, respectively, $n_1, \ldots, n_k \geq 2$. By
observing the structure of the result, we can see that $n_1
2^{n_2+\cdots+n_k}$ is a good approximation with the ratio bound
$$\frac{n_1 2^{n_2+\cdots+n_k}}{SC_{CAT}(n_1, \ldots, n_k)}< 4.$$
However, all our experiments show that the ratio bound for this
approximation is less than $3$, but we have not been able to prove
it.

\section{Conclusion}

The new concept of state complexity approximation is introduced.
It further advances the idea of state complexity estimation by
including the ratio bound. The ratio bound gives a precise and
intuitive measurement on the ``quality'' of the estimation.

We show that state complexity approximation can play useful roles
in two different cases. In the first case, the exact state
complexities have not been obtained. They may be very difficult to
obtain. However, approximation results with low ratio bounds can
be obtained rather easily and they are good enough for practical
purposes in general. In the second case, the exact state
complexities have been proved. The approximations of those results
with low ratio bounds can simplify the formulae of the
complexities and make them more intuitive and easier to apply.

Clearly, the state complexity approximation is a useful and
important concept. We expect many new results on state complexity
approximation will come out in the near future.

\bibliographystyle{eptcs}
\bibliography{gao}

\end{document}